\newtheorem{mylemma}{Lemma}
\newcommand{\ie}{\textit{i}.\textit{e}.}
\title[Optimal control with probabilistic constraints]{Deterministic policy gradient based optimal control with probabilistic constraints}
\begin{document}
\setlength{\abovedisplayskip}{2pt}
\setlength{\belowdisplayskip}{2pt}

\maketitle
\vspace{-1.0cm}
\begin{abstract} 
	This paper studies a deep deterministic policy gradient (DDPG) based actor critic (AC) reinforcement learning (RL) technique to control a linear discrete-time system with a quadratic control cost while ensuring a constraint on the probability of potentially risky or undesirable events. The proposed methodology can be applied to both known and unknown system models with minor adjustments to the reward structure (negative cost). The problem is formulated by considering the average expected quadratic cost of the states and inputs over an infinite time horizon. Risky or undesirable events are represented as functions of the states at the next time step exceeding a user-defined limit. Two strategies are employed to manage the probabilistic constraint in scenarios of known and unknown system models. In the case of a known system model, the probabilistic constraint is replaced with an upper bound, such as the Chernoff bound. For unknown system models, the expected value of the indicator function of the occurrence of the risky or undesirable event is used. We have adopted a deterministic policy gradient (DPG) based AC method to derive a parameterised optimal policy. Extensive numerical simulations are performed using a second- and a fourth-order system, and the proposed method is compared with the standard risk-neutral linear quadratic regulator (LQR) and a chance-constrained model predictive control (MPC) method. The results demonstrate the effectiveness of the proposed approach in both known and unknown system model scenarios.
\end{abstract}

\begin{keywords}%
  optimal control, probabilistic constraints, chance constraints, deterministic policy based actor-critic method, reinforcement learning.
\end{keywords}

\section{Introduction} \label{sec:intro}
The problem of finding an optimal controller that minimizes the expected quadratic cost of states and inputs has been well-studied in the literature for linear time-invariant systems (LTI). The optimal control input for the control problem becomes a linear function of the states provided the noise has zero mean and a bounded second moment (\cite{bertsekasDynamicProgrammingOptimal}). Such a formulation of the cost is risk neutral since it only minimises the average value and does not consider the less frequent but risky or undesirable events. Such events may arise due to the presence of a long tail or a skewed distribution of the noise or uncertainty in the system. Therefore, for many practical systems, it is important to minimise the quadratic cost of states and inputs along with some additional constraints. For example, minimising the control cost of an unmanned aerial vehicle (UAV) is essential, and it is equally important for some applications to restrict the UAV from entering the adversarial range of vision (\cite{tsiamisRiskConstrainedLinearQuadraticRegulators2020}). Similarly, in the case of a climate-controlled building, the goal is to minimize the overall energy consumption while maintaining the temperature within a certain limit. Another practical example is the wind turbine, where the main objective is to maximize the power efficiency and, at the same time, to maintain the stress level on the blades, which varies due to the uncertain wind conditions, under a specified limit (\cite{schildbachScenarioApproachStochastic2014}). As studied by \cite{flemingStochasticMPCAdditive2019a}, the controller designed with a hard constraint is pessimistic compared to the one designed with a softer probabilistic constraint. In other words, the designed controller will lower the control cost if we constrain the probability of constraint violation to a small value, \i.e., allow the constraint to be violated only for very few occasions. For instance, temporarily exceeding specified temperature limits in a climate-controlled building under extreme conditions, like opening a window, may not significantly inconvenience occupants but could notably improve control costs.

\subsection{Related work}
A popular approach followed in the model predictive control (MPC) literature to design an optimal controller with probabilistic constraints, also known as chance constraints, involves scenario-based sampling. In the scenario-based sampling technique, samples are drawn from a known disturbance distribution and transform the probabilistic constraint into a set of finite algebraic conditions, as detailed in references like \cite{flemingStochasticMPCAdditive2019a,schildbachScenarioApproachStochastic2014}. Alternatively, \cite{hendriksDataControllerNonlinear2022} explored a different MPC approach, where the probabilistic chance constraint is approximated by an estimated expectation using the Hamiltonian Monte Carlo (HMC) method, specifically in the context of nonlinear systems. Other researchers have substituted the probabilistic chance or risk constraint with an algebraic one employing Chebyshev's inequality, as seen in works like \cite{schildbachLinearControllerDesign2015,yanStochasticModelPredictive2018a}. Furthermore, in (\cite{arcariStochasticMPCRobustness2023}), the chance constraints are formulated as the probability that the state and input values remain within certain sets, exceeding predefined thresholds. 

In addition to applying constraints on the probability of risky or undesirable events, a few other literature, such as \cite{zhaoGlobalConvergencePolicy2022,zhaoPrimaldualLearningModelfree2021}, have focused on constraining these events by limiting the conditional variance of a quadratic function of the states. These studies have demonstrated that when incorporating such a risk constraint along with the linear quadratic regulator (LQR) cost, the resulting optimal controller emerges as an affine function of the states (\cite{zhaoInfinitehorizonRiskconstrainedLinear2021,tsiamisRiskConstrainedLinearQuadraticRegulators2020}). To derive the optimal controller, these works employed the policy gradient based primal-dual optimization method, applicable to both model-based and model-free scenarios.

In scenarios where the underlying control system is either unknown or too complex to model accurately, reinforcement learning (RL) based algorithms have demonstrated remarkable usefulness. These algorithms are adept at finding optimal policies that maximise expected return over finite and infinite time horizons, as explored in various studies (\cite{bertsekasReinforcementLearningOptimal2019, busoniuReinforcementLearningControl2018,lopezEfficientOffPolicyQLearning2023}). Essentially, within the RL framework, an agent learns to make optimal decisions by using its own experiences or those of others without comprehensive knowledge about the environment or system (\cite{suttonReinforcementLearningSecond2018}). RL-based algorithms have also been studied to design the optimal controller that maximises the expected return along with additional constraints on risky or undesirable events (\cite{hanReinforcementLearningControl2021}). In (\cite{hanReinforcementLearningControl2021}), the constraint, which is called safety constraint, is modelled as the expected value of a continuous non-negative function of the states being within a specified threshold, and the optimal controller is derived using an actor-critic (AC) algorithm. 



\subsection{Our approach and contributions}
Our research addresses an optimal control problem focused on minimizing the average expected quadratic cost of states and inputs over an infinite time horizon subject to a constraint on the probability of risky or undesirable events. Risky or undesirable events are defined as occurrences in which a function of states exceeds a user-specified limit. Our approach differs from other studies in that we do not employ Chebyshev's inequality to replace probabilistic constraints (\cite{schildbachLinearControllerDesign2015, yanStochasticModelPredictive2018a}) or approximate them by sampling from known disturbance distributions (\cite{flemingStochasticMPCAdditive2019a, schildbachScenarioApproachStochastic2014}).

In addressing the probabilistic constraint, we adopt two distinct methodologies based on whether the system model is known or unknown. For cases where the system model is known, we substitute the probabilistic constraint with the Chernoff bound, which offers a tighter upper bound compared to the Chebyshev bound. Conversely, in scenarios with unknown system models, the probabilistic constraint is replaced by the expected value of an indicator function, indicating the occurrence of risky events. Here, we employ a model-free deterministic policy gradient (DPG) based AC method (\cite{lillicrapContinuousControlDeep2016}), utilizing two separate neural networks for the actor and value function approximation.

We compare our proposed policy with a risk-neutral LQR and a scenario-based MPC method by simulations. As anticipated, our policy effectively reduced the risky or undesirable events, albeit with a slight increase in quadratic cost compared to the standard LQR. While the MPC method showed performance comparable to our approach, its effectiveness is highly dependent on the length of the time horizon. Longer time horizons enhance MPC performance but also increase computational complexity. Moreover, MPC requires solving an optimization problem in real-time at each time step, unlike our proposed method. Once trained, our method relies solely on the actor-network as a feedforward network, significantly reducing runtime computational complexities. Additionally, our method offers a parameterized policy in the form of an actor-network, an advantage not available for the MPC method.


\subsection{Organization}
The rest of the paper is organized as follows. Section~\ref{sec:problem} formulates the optimisation problem mathematically. Section~\ref{sec:AC_method} discusses a solution approach to the optimisation problem using the AC method in the scenarios of known and unknown models. The reward structure used with the AC method is discussed in Section~\ref{sec:reward}. The analytical expressions of the probability bounds or the evaluation of the probability values for two special cases are derived in Section~\ref{sec:case_studies}. The results of the numerical simulations are provided and discussed in Section~\ref{sec:results}. Finally, Section~\ref{sec:conclusion} concludes the paper.  

\section{Problem Formulation} \label{sec:problem}
We consider the following linear time-invariant (LTI) system for the study. 
\begin{equation}
	{\bf{x}}_{k+1}={\bf A}{\bf{x}}_{k}+{\bf B}{\bf{u}}_{k}+{\bf{w}}_{k}.
	\label{eqn:state_eqn}
\end{equation}
Here ${\bf{x}}_{k}\in {\rm I\!R}^{n}$ and ${\bf{u}}_{k}\in {\rm I\!R}^{p}$ are the state and input vectors at the $k$-th time instant respectively, whereas ${\bf{w}}_{k} \in {\rm I\!R}^{n}$ is an independent and identically distributed  (iid) process noise with distribution $f_w({\bf w})$. ${\bf{A}}\in {\rm I\!R}^{n\times n}$, ${\bf{B}}\in {\rm I\!R}^{n\times p}$. 

We assume that all the states are measured and the system $({\bf A},{\bf B})$ is stabilizable. In the standard LQR problem, the following cost function is minimised. 
\begin{align}
J_c=\lim_{T\to\infty} \frac{1}{T} \sum_{k=1}^T E\left[ {\bf x}^T_k{\bf V}{\bf x}_k+{\bf u}^T_k{\bf U}{\bf u}_k\right],
	\label{eqn:cost_fun_J}
\end{align}
where ${\bf{V}}\in {\rm I\!R}^{n\times n}$ and ${\bf{U}}\in {\rm I\!R}^{p\times p}$ are positive definite weight matrices. We also assume that $({\bf A},{{\bf V}^{1/2}})$ is detectable. If we assume that the noise is zero mean and the second-order moment of the noise is bounded, then the optimum input appears as a fixed gain linear control signal (\cite{bertsekasDynamicProgrammingOptimal}), see (\ref{eqn:opt_u}). If the noise has a non-zero but known mean, we can always subtract the mean and get a zero mean noise.
\begin{align}
	{\bf u}^*_k&={\bf K}{ {\bf x}}_{k},  \label{eqn:opt_u} \ \\
	{\bf K}&=-\left( {\bf B}^T{\bf S}{\bf B}+{\bf U}\right)^{-1}{\bf B}^T{\bf S}{\bf A} \text{,} \label{eqn:L}
\end{align}
where $\bf S$ is the solution to the following algebraic Riccati equation,
\begin{align}
	{\bf S}={\bf A}^T{\bf S}{\bf A}+{\bf V}-{\bf A}^T{\bf S}{\bf B}\left({\bf B}^T{\bf S}{\bf B}+{\bf U}\right)^{-1}{\bf B}^T{\bf S}{\bf A}.
	\label{eqn:S}
\end{align} 
However, as discussed before, the cost function $J_c$ does not take into account the less frequent but risky events. Therefore, in our study, we use an additional constraint on the probability of risky or undesirable events, and the optimization problem takes the following form.  
\begin{equation}
\begin{aligned}
	& \min_u  J_c=\lim_{T\to\infty} \frac{1}{T} \sum_{k=1}^T E\left[ {\bf x}^T_k{\bf W}{\bf x}_k+{\bf u}^T_k{\bf U}{\bf u}_k\right] \label{eqn:opt_prob}\\
	& s.t. \lim_{T\to\infty} \frac{1}{T} \sum_{k=1}^T E\left[P\left\{f_c\left({\bf x}_{k+1} \right)\ge \epsilon \mid \Psi_{k} \right\}  \right] \le \delta. 
\end{aligned}
\end{equation}
Here, $\epsilon >0$ and $\delta >0$ are user selected parameters. $\Psi_{k} \triangleq \left \{{\bf x}_{k}, {\bf u}_{k} \mid k \ge 1 \right \}$ denotes the set of all information up to the instant $k$. 
\begin{remark}
	The risky or undesirable event is modelled as the function $f_c(\cdot)$ of the state crossing a threshold $\epsilon$, \ie, $f_c\left({\bf x}_{k+1} \right)\ge \epsilon$, and we are interested in limiting the probability of these events. Since such probability itself is a function of the random information set $\Psi_{k}$, we have taken the expectation with respect to this set in the above formulation. Furthermore, we are interested in keeping the long term average probability bounded over the infinite time horizon.
\end{remark}
The constrained optimization of (\ref{eqn:opt_prob}) is converted into the unconstrained stochastic control problem using the Lagrangian multiplier $C_l$ as follows, 
\begin{equation}
	\min_u J_L = \lim_{T\to\infty} \frac{1}{T} \sum_{k=1}^T E\left[g\left( {\bf x}_k,{\bf u}_k\right) \right],
	\label{eqn:JL}
\end{equation}
where the per stage cost $g(\cdot)$ takes the following form, 
\begin{align}
	&g\left( {\bf x}_k,{\bf u}_k\right) = f\left( {\bf x}_k,{\bf u}_k\right) +C_l  h_p\left( {\bf x}_k,{\bf u}_k\right)\text{, where} \label{eqn:gk} \\
	&f\left( {\bf x}_k,{\bf u}_k\right) = {\bf x}^T_k{\bf W}{\bf x}_k+{\bf u}^T_k{\bf U}{\bf u}_k,  \label{eqn:fk} \\
	& h_p\left( {\bf x}_k,{\bf u}_k\right) = P\left\{f_c\left({\bf x}_{k+1} \right)\ge \epsilon \mid \Psi_{k} \right\}.  \label{eqn:hpk} 
\end{align}

Note that the per-stage cost function may contain an intractable probability constraint, in general. The following section discusses how the probabilistic constraint can be converted into a more tractable constraint function.

\begin{remark}
A formal proof to show that the optimal solution to the cost function with a Lagrangian multiplier, as given in (\ref{eqn:JL}), will also be an optimal solution of the original constrained optimisation problem as in (\ref{eqn:opt_prob}) is difficult as such results for unbounded cost functions with possibly non-convex constraints are few. Also, in general, the optimal solution obtained may only be a local optimum.  Further investigation into this is currently underway.
\end{remark}

%
%
%
%
%
%


\section{Reformulation of the Constraint Function}\label{sec:reward}
In the sections that follow, we address the intractable probability constraint differently, depending on whether the model is known or unknown. For the study presented in this article, we have explored both a quadratic and a linear form for the function $f_c(\cdot) $ as follows. 
\begin{align}
    f_c\left({\bf x}_{k} \right) = {\bf x}^T_{k}{\bf Q}{\bf x}_{k},  \label{eqn:quad_const} \\
    f_c\left({\bf x}_{k} \right) = {\bf q}^T{\bf x}_{k}, \label{eqn:lin_const} 
\end{align}
where ${\bf Q}\in {\rm I\!R}^{n\times n} > 0$, and  ${\bf q}\in {\rm I\!R}^{n}$. Equation (\ref{eqn:lin_const}) is common in the literature on chance-constrained MPC. On the other hand, (\ref{eqn:quad_const}) is associated with the probability of exceeding a specified threshold for the quadratic cost in the subsequent state, thereby indicating the probability of an energy or cost outage.

\subsection{Known System Model} \label{subsec:known_AB}
For the discussion in this section, we assume that the system model, \ie, ${\bf A}$ and ${\bf B}$ matrices, is known. When $f_c\left({\bf x}_{k} \right)$ is a quadratic function of states (\ref{eqn:quad_const}), we replace the probability value in $g(\cdot)$, \ie, $h_p(\cdot)$ (\ref{eqn:hpk}), by the Chernoff bound $h_c(\cdot)$ as provided in the following Lemma~\ref{lemma:cher_bound}.
\begin{mylemma} \label{lemma:cher_bound}
For the LTI system given in (\ref{eqn:state_eqn}) and $f_c(\cdot)$ given in (\ref{eqn:quad_const}), the probability value $h_p(\cdot)$ (\ref{eqn:hpk}) will be upper bounded by $h_c(\cdot)$ as given in (\ref{eqn:hck}).
\begin{equation}
	h_c\left({\bf x}_k, {\bf u}_k\right)=\inf_{s\ge0}\left[e^{-\left(\epsilon-d_k \right)s}M\left(y_k,s \right) \right],
	\label{eqn:hck}
\end{equation}
where $M(y_k,s) \triangleq \text{E}\left[e^{sy_k}\right]$ denotes the moment generating function of $y_k$. We assume that the noise process ${\bf w}_k$ is such that $M(\cdot)$ exits. $y_k$ and $d_k$ are given as follows:
\begin{align}
	&y_k= {\bf w}^T_k{\bf Q}{\bf w}_k+{\bf a}^T_k{\bf w}_k \text{, and} \label{eqn:yk} \\
	& d_k = {\bf \hat x}^T_k{\bf Q}{\bf \hat x}_k, \label{eqn:dk}
\end{align}
where ${\bf a}_k = 2{\bf Q}{\bf \hat x}_k$ and ${\bf \hat x}_k$ is an estimate of ${\bf x}_{k+1}$ given $\Psi_k$ as defined below:
\begin{equation}
{\bf \hat x}_k = \text{E}\left[{\bf x}_{k+1}\mid\Psi_k \right]={\bf A}{\bf{x}}_{k}+{\bf B}{\bf{u}}_{k}.
\label{eqn:xhk}
\end{equation} 
\end{mylemma}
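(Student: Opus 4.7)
The plan is to reduce the probability in \eqref{eqn:hpk} to a tail bound on a scalar random variable depending only on the noise ${\bf w}_k$, and then invoke the classical Chernoff inequality. Conditional on $\Psi_k$, the quantity $\hat{\bf x}_k = {\bf A}{\bf x}_k + {\bf B}{\bf u}_k$ is deterministic, so substituting ${\bf x}_{k+1} = \hat{\bf x}_k + {\bf w}_k$ into the quadratic form $f_c({\bf x}_{k+1}) = {\bf x}_{k+1}^T {\bf Q}{\bf x}_{k+1}$ and expanding gives
\begin{equation*}
f_c({\bf x}_{k+1}) = \hat{\bf x}_k^T{\bf Q}\hat{\bf x}_k + 2\hat{\bf x}_k^T{\bf Q}{\bf w}_k + {\bf w}_k^T{\bf Q}{\bf w}_k = d_k + y_k,
\end{equation*}
using the definitions of $d_k$, ${\bf a}_k = 2{\bf Q}\hat{\bf x}_k$, and $y_k$ in \eqref{eqn:dk}--\eqref{eqn:yk}.

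Because $d_k$ is $\Psi_k$-measurable, the event $\{f_c({\bf x}_{k+1}) \ge \epsilon\}$ conditioned on $\Psi_k$ is equivalent to $\{y_k \ge \epsilon - d_k\}$, so $h_p({\bf x}_k,{\bf u}_k) = P\{y_k \ge \epsilon - d_k \mid \Psi_k\}$. I would then apply the Chernoff (exponential Markov) inequality: for any $s \ge 0$, since $\exp(s\cdot)$ is nondecreasing,
\begin{equation*}
P\{y_k \ge \epsilon - d_k \mid \Psi_k\} \le e^{-(\epsilon - d_k)s}\,\text{E}\bigl[e^{s y_k} \mid \Psi_k\bigr] = e^{-(\epsilon - d_k)s} M(y_k, s),
\end{equation*}
where the conditional expectation reduces to the unconditional one because ${\bf w}_k$ is iid and independent of $\Psi_k$, so $y_k$ conditional on $\Psi_k$ depends on $\Psi_k$ only through the deterministic vector ${\bf a}_k$ (which is why $M(y_k,s)$ is written in terms of $y_k$ with ${\bf a}_k$ treated as fixed). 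Taking the infimum over $s \ge 0$ yields the tightest such bound, giving \eqref{eqn:hck}.

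The only subtlety, and the step I would flag as the main obstacle, is justifying that the moment generating function $M(y_k,s)$ exists on a neighborhood of $s = 0$ and is well-defined for the relevant $s \ge 0$: $y_k$ is a quadratic-plus-linear form in ${\bf w}_k$, so its mgf can blow up for large $s$ depending on the tails of $f_w$ and on ${\bf Q}$. This is precisely why the statement carries the explicit assumption that the noise distribution is such that $M(\cdot)$ exists; under that hypothesis, the infimum in \eqref{eqn:hck} is taken over the set of $s \ge 0$ where the bound is finite, and the argument above goes through. The case of degenerate infimum (e.g., $\epsilon \le d_k$, making the bound vacuous with value $\ge 1$) is consistent with $h_p \le 1$ and requires no separate treatment.
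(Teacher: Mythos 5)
Your proof is correct and follows essentially the same route as the paper's: substitute ${\bf x}_{k+1} = {\bf \hat x}_k + {\bf w}_k$, expand the quadratic form so the event becomes $\{y_k \ge \epsilon - d_k\}$ with $d_k$ and ${\bf a}_k$ deterministic given $\Psi_k$, and apply the Chernoff (exponential Markov) bound with an infimum over $s \ge 0$. Your additional remarks on the existence of the moment generating function and on the vacuous case $\epsilon \le d_k$ are sensible elaborations of the same argument.
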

\begin{proof}
	Using (\ref{eqn:state_eqn}) and (\ref{eqn:xhk}), we can write, 
	\begin{equation}
		{\bf x}_{k+1} = {\bf \hat x}_k + {\bf w}_k.
		\label{eqn:xk_temp}
	\end{equation}
Then using (\ref{eqn:xk_temp}), the inequality in (\ref{eqn:hpk}) can be written in the following form, 
\begin{equation}
	\begin{aligned}
		&{\bf x}^T_{k+1}{\bf Q}{\bf x}_{k+1} \ge \epsilon 
		\Rightarrow &{\bf w}^T_k{\bf Q}{\bf w}_k+{\bf a}^T_k{\bf w}_k \ge \epsilon - d_k.
		\label{eqn:ineq_temp}
	\end{aligned}
\end{equation}
Finally, applying the Chernoff bound (\cite{pishro-nikIntroductionProbabilityStatistics2014}) on the conditional probability of the inequality (\ref{eqn:ineq_temp}), we get the bound $h_c(\cdot)$ in (\ref{eqn:hck}).
\end{proof}

\begin{remark}
	The Chernoff bound provides an upper limit of the probability value in the constraint. Therefore, an optimal solution obtained for the optimization problem using the per-stage cost with the Chernoff bound as given in (\ref{eqn:h_AB}) will also be a feasible solution to the original constrained problem with per-stage cost as given in (\ref{eqn:gk}).
\end{remark}

\begin{remark}
	The reason for using the Chernoff bound is that the Chernoff bound is tighter than the Chebyshev bound (\cite{pishro-nikIntroductionProbabilityStatistics2014}). 
	
\end{remark}
For the LTI system given in (\ref{eqn:state_eqn}) and $f_c(\cdot)$ given in (\ref{eqn:lin_const}), the probability value $h_p(\cdot)$ (\ref{eqn:hpk}) can be evaluated as
\begin{equation}
	h_c\left({\bf x}_k, {\bf u}_k\right)= \int_{{\mathcal{D}}_c} f_w({\bf w})dw.
	\label{eqn:hck_lin}
\end{equation}
Here, ${\mathcal{D}}_c = \left \{{\bf w} \mid {\bf q}^T{\bf w} \ge \epsilon - {\bf q}^T{\bf A}{\bf x}_k - {\bf q}^T{\bf B}{\bf u}_k \right \}$.
\begin{remark}
    In the case of Gaussian and Gaussian mixture models, represented by $f_w({\bf w})$, it is feasible to derive a closed-form expression for (\ref{eqn:hck_lin}). In cases where a closed-form solution is not achievable for a specific distribution, it is advisable to employ an appropriate bound, such as Markov's inequality, Chebyshev's inequality, Hoeffding's inequality and Chernoff's inequality, as a substitute for direct computation of the probability value.
\end{remark}

In conclusion, when system model parameters $\bf A$ and $\bf B$ are known, we have used the following per-stage cost function, see (\ref{eqn:h_AB}). Here $h_c\left( {\bf x}_k,{\bf u}_k\right)$ is given by (\ref{eqn:hck}) or (\ref{eqn:hck_lin}), where $f_c(\cdot) $ is a quadratic (\ref{eqn:quad_const}) or linear (\ref{eqn:lin_const}) function of the states, respectively.   
\begin{equation}
g\left( {\bf x}_k, {\bf u}_k \right) = f\left( {\bf x}_k,{\bf u}_k\right) +C_l  h_c\left( {\bf x}_k,{\bf u}_k\right).\label{eqn:h_AB}
\end{equation}
\subsection{Unknown System Model } \label{subsec:unknown_AB}
For the discussion in this section, we assume that the system model consisting of  ${\bf A}$ and ${\bf B}$ matrices, is unknown. Under that assumption, we can not evaluate $h_c(\cdot,\cdot)$  (\ref{eqn:hck}) or (\ref{eqn:hck_lin}). However, we can represent the probability value as an expectation as follows.
\begin{equation}
	P\left\{ f_c\left({\bf x}_{k+1}\right) \ge \epsilon \mid \Psi_{k}\right\} = \text{E} \left[\mathds{1}_{\left \{f_c\left({\bf x}_{k+1}\right) \ge \epsilon \right\}} \mid  \Psi_{k}  \right].
	\label{eqn:P2E}
\end{equation}
Here $\mathds{1}_{\left\{Z\right\}}$ denotes the indicator function where $Z$ denotes the condition. We can always evaluate the indicator function using the recorded data, even in cases where the system model is unknown. Consequently, for such cases, we adopt a slightly altered version of the per-stage cost function:
\begin{equation}
	g\left( {\bf x}_k, {\bf u}_k,{\bf x}_{k+1} \right) =  f\left( {\bf x}_k,{\bf u}_k\right) +C_l \mathds{1}_{\left\{f_c\left({\bf x}_{k+1}\right) \ge \epsilon\right\}}  
	\label{eqn:r_without_AB}
\end{equation}
Note that the expectation operator is already present for the cost function $J_L$ in (\ref{eqn:JL}). 

Up to this point, no specific assumptions have been made regarding the distribution of the noise ${\bf w}_k$. In other words, the noise may not be zero mean or Gaussian. However, the challenge lies in deducing the value of $h_c\left( {\bf x}_k,{\bf u}_k\right)$, as defined in (\ref{eqn:hck}) or (\ref{eqn:hck_lin}), for a scenario where the model is known. In the subsequent section, we have studied two specific cases: first, where the noise is assumed to be iid with a non-zero mean Gaussian distribution; and second, where the noise follows an iid Gaussian mixture model (GMM).
\section{Example cases: Evaluation of $h_c(\cdot)$ } \label{sec:case_studies}
In this section, we derive the analytical forms of $h_c\left( {\bf x}_k,{\bf u}_k\right)$ as given by (\ref{eqn:hck}) and (\ref{eqn:hck_lin}) for two special cases of $f_w({\bf w})$. 
\subsection{Case 1.1: $f_w({\bf w})$ is Gaussian and $h_c\left( {\bf x}_k,{\bf u}_k\right)$ given by (\ref{eqn:hck})   } \label{subsec:case1.1}
Here we assume that noise ${\bf w}_k$ is iid and ${\bf w}_k \sim\mathcal{N}\left(\mu_w,\Sigma_w \right)$, $\Sigma_w \in {\rm I\!R}^{n\times n} > 0$. Also, we assume that $\bf Q$ is symmetric. Under those conditions, we can derive an analytical expression of the moment generating function $M(\cdot)$ in (\ref{eqn:hck}) as stated in Theorem 3.2a.2 in (\cite{mathaiQuadraticFormsRandom1992}) and presented as the following Lemma. Replacing (\ref{eqn:M_gaussian}) in (\ref{eqn:hck}), we can evaluate $h_c\left( {\bf x}_k,{\bf u}_k\right)$.
\begin{mylemma} \label{lemma:gaussian_case}
	If ${\bf w}_k \in {\rm I\!R}^{n}$, iid, and ${\bf w}_k \sim\mathcal{N}\left(\mu_w,\Sigma_w \right)$, where $\Sigma_w \in {\rm I\!R}^{n\times n} > 0$, the moment generating function $M\left(y_k,s \right)$ of the random variable $y_k$ in (\ref{eqn:yk}) takes the following form, 
	\begin{equation}
		\begin{aligned}
			&M\left(y_k,s \right) = \exp\left(s\left(\mu_w^T{\bf Q}\mu_w + {\bf a}^T_k\mu_w\right)+ 0.5s\sum_{j=1}^nb_{k,j}^2\left(1-2s\lambda_j\right)^{-1}\right)\prod_{j=1}^n\left(1-2s\lambda_j\right)^{-1/2}. \label{eqn:M_gaussian}
		\end{aligned}
	\end{equation}
Here $\lambda_j$s are the eigenvalues of the matrix $\Sigma^{\frac{1}{2}}{\bf Q}\Sigma^{\frac{1}{2}}$, and $\bf P$ is the corresponding eigenvector matrix. ${\bf b}_k = \left[b_{k,1}, \cdots, b_{k,n} \right]^T = {\bf P}^T\left(\Sigma^{\frac{1}{2}}{\bf a}_k + 2\Sigma^{\frac{1}{2}}{\bf Q}\mu_w \right)$. ${\bf a}_k$ is the same as given in Lemma~\ref{lemma:cher_bound}. 
\end{mylemma}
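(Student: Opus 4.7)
The goal is to compute $M(y_k,s)=\mathrm{E}[e^{sy_k}]$ for $y_k={\bf w}_k^T{\bf Q}{\bf w}_k+{\bf a}_k^T{\bf w}_k$ when ${\bf w}_k\sim\mathcal{N}(\mu_w,\Sigma_w)$. The standard route is to (i) standardize the Gaussian, (ii) diagonalize the quadratic form so the integrand decouples into a product of independent one–dimensional Gaussian integrals, and (iii) evaluate each one–dimensional integral by completing the square. Steps (i) and (ii) are designed precisely so that the $\lambda_j$'s and $b_{k,j}$'s that appear in the stated formula have a transparent meaning.

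\textbf{Step 1: standardize.} Write ${\bf w}_k=\mu_w+\Sigma_w^{1/2}{\bf z}$ with ${\bf z}\sim\mathcal{N}(0,I_n)$, where $\Sigma_w^{1/2}$ is the (symmetric) positive–definite square root. Expanding $y_k$ and collecting terms, I would obtain
\begin{equation*}
y_k \;=\; \bigl(\mu_w^T{\bf Q}\mu_w+{\bf a}_k^T\mu_w\bigr) \;+\; \bigl(\Sigma_w^{1/2}{\bf a}_k+2\Sigma_w^{1/2}{\bf Q}\mu_w\bigr)^T{\bf z} \;+\; {\bf z}^T\bigl(\Sigma_w^{1/2}{\bf Q}\Sigma_w^{1/2}\bigr){\bf z}.
\end{equation*}
The symmetry of ${\bf Q}$ and $\Sigma_w^{1/2}$ is used here to guarantee the quadratic–form matrix is symmetric.

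\textbf{Step 2: diagonalize and decouple.} Since $\Sigma_w^{1/2}{\bf Q}\Sigma_w^{1/2}$ is symmetric, it admits the spectral decomposition ${\bf P}\Lambda{\bf P}^T$ with $\Lambda=\mathrm{diag}(\lambda_1,\dots,\lambda_n)$ and orthogonal ${\bf P}$. Set ${\bf v}={\bf P}^T{\bf z}$; by orthogonality ${\bf v}\sim\mathcal{N}(0,I_n)$, and with ${\bf b}_k={\bf P}^T(\Sigma_w^{1/2}{\bf a}_k+2\Sigma_w^{1/2}{\bf Q}\mu_w)$ the random part of $y_k$ becomes $\sum_{j=1}^n(b_{k,j}v_j+\lambda_j v_j^2)$, a sum of independent scalar terms. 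This matches the definitions of $\lambda_j$ and ${\bf b}_k$ in the lemma.

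\textbf{Step 3: one–dimensional integrals.} Independence of the $v_j$'s factorizes the MGF:
\begin{equation*}
M(y_k,s)\;=\;\exp\!\bigl(s(\mu_w^T{\bf Q}\mu_w+{\bf a}_k^T\mu_w)\bigr)\prod_{j=1}^n \mathrm{E}\!\left[e^{s(b_{k,j}v_j+\lambda_j v_j^2)}\right].
\end{equation*}
For each factor I would complete the square in the exponent $-\tfrac12(1-2s\lambda_j)v_j^2+sb_{k,j}v_j$, which yields a Gaussian integral with variance $(1-2s\lambda_j)^{-1}$ and a residual exponential factor. This produces $(1-2s\lambda_j)^{-1/2}$ times an exponential term in $b_{k,j}^2/(1-2s\lambda_j)$; multiplying over $j$ and combining with the deterministic exponential from step 1 gives the claimed expression.

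\textbf{Main obstacle.} The calculations themselves are routine Gaussian algebra; the only subtle point is tracking the range of $s$ for which $M(y_k,s)$ is finite, namely $s<\min_{j:\lambda_j>0}1/(2\lambda_j)$, so that each $1-2s\lambda_j>0$ and the one–dimensional integrals converge. Since the infimum in \eqref{eqn:hck} is taken over $s\ge 0$, this restriction is the natural feasible set for the optimization and should be noted. I would also double–check the coefficient of the exponential term in each one–dimensional factor (in particular whether it carries $s$ or $s^2$) against the stated formula when writing out the final substitution.
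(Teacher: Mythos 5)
Your derivation is correct and complete. Note that the paper does not actually prove this lemma: its ``proof'' is a one-line citation to Theorem 3.2a.2 of Mathai and Provost, and your three steps (standardize, spectrally decompose $\Sigma_w^{1/2}{\bf Q}\Sigma_w^{1/2}$, factor into one-dimensional Gaussian integrals by completing the square) are precisely the standard argument behind that cited theorem, so you have in effect supplied the omitted proof rather than found a new route. What your self-contained derivation buys is a concrete check of the formula, and here your closing caution is well placed: completing the square in $-\tfrac12(1-2s\lambda_j)v_j^2+sb_{k,j}v_j$ leaves a residual exponent of $\tfrac{s^2b_{k,j}^2}{2(1-2s\lambda_j)}$, so the correct expression carries $0.5\,s^2\sum_j b_{k,j}^2(1-2s\lambda_j)^{-1}$, whereas the lemma as printed has $0.5\,s\sum_j b_{k,j}^2(1-2s\lambda_j)^{-1}$. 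A sanity check with $\lambda_j\to 0$ (so $y_k$ is Gaussian and its MGF must contain $e^{s^2b^2/2}$) confirms the $s^2$; the statement in the paper appears to contain a typo that your derivation exposes. Your remark on the domain of validity, $1-2s\lambda_j>0$ for all $j$, is also a genuine point that the paper glosses over when it takes the infimum over $s\ge 0$ in its Chernoff bound.
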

\begin{proof}
	See the proof of Theorem 3.2a.2 from \cite{mathaiQuadraticFormsRandom1992}. \vspace{-0.5cm}
	\end{proof}
 \subsection{Case 1.2: $f_w({\bf w})$ is Gaussian and $h_c\left( {\bf x}_k,{\bf u}_k\right)$ given by (\ref{eqn:hck_lin})   } \label{subsec:case1.2}
 Here we assume that noise ${\bf w}_k$ is iid and ${\bf w}_k \sim\mathcal{N}\left(\mu_w,\Sigma_w \right)$, $\Sigma_w \in {\rm I\!R}^{n\times n} > 0$. Under these assumptions, ${\bf w}_q = {\bf q}^T{\bf w}$ becomes a Gaussian RV with mean ${\bf q}^T{\bf \mu}_w$ and variance ${\bf q}^T{ \Sigma}_w {\bf q}$. Finally, $h_c\left( {\bf x}_k,{\bf u}_k\right)$ (\ref{eqn:hck_lin}) can be evaluated as $\left( 1- CDF\left(\epsilon - {\bf q}^T{\bf A}{\bf x}_k - {\bf q}^T{\bf B}{\bf u}_k \right) \right)$. Here the function $CDF\left({\bf \tilde w}_{q} \right) = P\left\{{\bf w}_{q} \le {\bf \tilde w}_{q} \right\}$.

\subsection{Case 2.1: $f_w({\bf w})$ is GMM and $h_c\left( {\bf x}_k,{\bf u}_k\right)$ given by (\ref{eqn:hck})} \label{subsec:case2.1}
We assume the following LTI system model and the noise structure for this case. 
\begin{equation}
	{\bf{x}}_{k+1}={\bf A}{\bf{x}}_{k}+{\bf B}{\bf{u}}_{k}+{\bf B}{\bf{w}}_{k}.
	\label{eqn:state_eqn2}
\end{equation}
Here ${\bf w}_k \in {\rm I\!R}^{p}$, iid, and ${\bf w}_k \sim f_w({\bf w})$. Also, $f_w({\bf w})$ is a Gaussian mixture model (GMM) as follows,
\begin{equation}
	f_w({\bf w})=\sum_{j=1}^p \pi_j \mathcal{N}\left({\bf w};\mu_j,\Sigma_j \right).
	\label{eqn:f_gmm}
\end{equation}
Here $0< \pi_j < 1$ and $\sum_{j=1}^p \pi_j = 1$. Now the Chernoff bound is provided in the following lemma. 
\begin{mylemma} \label{lemma:cher_bound_g}
	For the LTI system given in (\ref{eqn:state_eqn2}), the probability value $h_p(\cdot)$ (\ref{eqn:hpk}) will be upper bounded by $h_g(\cdot)$ as given in (\ref{eqn:hgk}).
	\begin{equation}
		h_c\left({\bf x}_k, {\bf u}_k\right)=\inf_{s\ge0}\left[e^{-\left(\epsilon-d_k \right)s}M_g\left(y_{gk},s \right) \right],
		\label{eqn:hgk}
	\end{equation}
	where $M_g(\cdot)$ denotes the moment generating function of $y_{gk}$. $d_k$ is same as (\ref{eqn:dk}), and $y_{gk}$ is given as follows
	\begin{align}
		&y_{gk}= {\bf w}^T_k{\bf Q}_g{\bf w}_k+{\bf a}^T_{gk}{\bf w}_k \text{, and} \label{eqn:ygk} \\
		& {\bf Q}_g = {\bf B}^T{\bf Q}{\bf B}, \label{eqn:Bg}
	\end{align}
	where ${\bf a}_{gk} = 2{\bf B}^T{\bf Q}{\bf \hat x}_k$, and ${\bf \hat x}_k$ is an estimate of ${\bf x}_{k+1}$ given $\Psi_k$ as given in Lemma~\ref{lemma:cher_bound}. Finally, $M_g(\cdot)$  will take the following analytical form for the noise distribution given in (\ref{eqn:f_gmm}).
	\begin{equation}
		M_g\left(y_{gk},s \right)=\sum_{j=1}^p\pi_jM\left(y_{j,k},s \right).
		\label{eqn:Mg}
	\end{equation}
Here the function $M(\cdot)$ is same as given in Lemma~\ref{lemma:gaussian_case}. $y_{j,k}$ is given as follows,
\begin{equation}
	y_{j,k}= {\bf w}^T_{j,k}{\bf Q}_g{\bf w}_{j,k}+{\bf a}^T_{g,k}{\bf w}_{j,k} \text{, and } {\bf w}_{j,k} \sim \mathcal{N}\left( \mu_j, \Sigma_j \right) \label{eqn:yjk} 
\end{equation}
\end{mylemma}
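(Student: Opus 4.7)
The plan is to mirror the proof of Lemma~\ref{lemma:cher_bound} almost verbatim, with the modification that noise enters the state equation through $\mathbf{B}$. First I would use (\ref{eqn:state_eqn2}) together with the definition $\hat{\mathbf{x}}_k = \mathbf{A}\mathbf{x}_k + \mathbf{B}\mathbf{u}_k$ to write $\mathbf{x}_{k+1} = \hat{\mathbf{x}}_k + \mathbf{B}\mathbf{w}_k$. Expanding the quadratic form $\mathbf{x}_{k+1}^T \mathbf{Q} \mathbf{x}_{k+1} \geq \epsilon$ gives $\mathbf{w}_k^T \mathbf{B}^T\mathbf{Q}\mathbf{B}\mathbf{w}_k + 2\hat{\mathbf{x}}_k^T\mathbf{Q}\mathbf{B}\mathbf{w}_k \geq \epsilon - d_k$, which matches the definitions $\mathbf{Q}_g = \mathbf{B}^T\mathbf{Q}\mathbf{B}$, $\mathbf{a}_{gk} = 2\mathbf{B}^T\mathbf{Q}\hat{\mathbf{x}}_k$, and $y_{gk} = \mathbf{w}_k^T\mathbf{Q}_g\mathbf{w}_k + \mathbf{a}_{gk}^T\mathbf{w}_k$. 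Applying the standard Chernoff bound to the conditional probability of $y_{gk} \geq \epsilon - d_k$ (as was done in Lemma~\ref{lemma:cher_bound}) yields (\ref{eqn:hgk}), provided the MGF $M_g(y_{gk},s) = \mathrm{E}[e^{sy_{gk}}]$ exists.

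The second part of the plan is to derive the analytical form (\ref{eqn:Mg}) of $M_g$ under the GMM density (\ref{eqn:f_gmm}). I would use linearity of integration:
\begin{equation*}
M_g(y_{gk},s) = \int e^{s\, y_{gk}(\mathbf{w})} f_w(\mathbf{w}) \, d\mathbf{w} = \sum_{j=1}^p \pi_j \int e^{s\, y_{gk}(\mathbf{w})} \mathcal{N}(\mathbf{w};\mu_j,\Sigma_j) \, d\mathbf{w}.
\end{equation*}
Each summand is, by definition, the MGF of the quadratic-plus-linear form in (\ref{eqn:yjk}) where $\mathbf{w}_{j,k} \sim \mathcal{N}(\mu_j,\Sigma_j)$. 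Invoking Lemma~\ref{lemma:gaussian_case} with $(\mathbf{Q},\mathbf{a}_k,\mu_w,\Sigma_w)$ replaced by $(\mathbf{Q}_g,\mathbf{a}_{gk},\mu_j,\Sigma_j)$ identifies each such integral with $M(y_{j,k},s)$, giving (\ref{eqn:Mg}).

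The main obstacle, if any, is a bookkeeping one rather than a technical one: one must be careful that the role played by $\mathbf{Q}$ in Lemma~\ref{lemma:gaussian_case} is consistently taken over by $\mathbf{Q}_g = \mathbf{B}^T\mathbf{Q}\mathbf{B}$ here, so the eigendecomposition in (\ref{eqn:M_gaussian}) should be performed per component with $\Sigma_j^{1/2}\mathbf{Q}_g\Sigma_j^{1/2}$, and $\mathbf{b}_k$ should be recomputed per component using $\mu_j$ and $\Sigma_j$. A secondary subtlety is existence of $M_g$: since $y_{gk}$ is at most quadratic in $\mathbf{w}$, each Gaussian-component MGF exists for $s$ in an open interval around $0$ determined by the eigenvalues $\lambda_j^{(i)}$ of $\Sigma_i^{1/2}\mathbf{Q}_g\Sigma_i^{1/2}$; taking the intersection across components yields a common domain on which $M_g$ is finite, and the infimum in (\ref{eqn:hgk}) is then taken over the nonnegative part of this domain. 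With these points noted, the result follows directly from the same Chernoff argument used for Lemma~\ref{lemma:cher_bound}.
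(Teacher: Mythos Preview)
Your proposal is correct and follows essentially the same route as the paper: derive (\ref{eqn:ygk})--(\ref{eqn:Bg}) by substituting $\mathbf{B}\mathbf{w}_k$ for $\mathbf{w}_k$ in the argument of Lemma~\ref{lemma:cher_bound}, apply the Chernoff bound, and then obtain (\ref{eqn:Mg}) by expanding the GMM density and using linearity of the integral so that each term is the Gaussian MGF of Lemma~\ref{lemma:gaussian_case}. Your extra remarks on the per-component eigendecomposition and the common domain of existence of $M_g$ go beyond what the paper spells out but are consistent with it.
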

\begin{proof}
	The proof of Lemma~\ref{lemma:cher_bound_g} is straightforward. (\ref{eqn:ygk}) and (\ref{eqn:Bg}) can be derived from (\ref{eqn:yk}) by replacing ${\bf w}_k$ by ${\bf B}{\bf w}_k$. (\ref{eqn:Mg}) is derived as follows (\cite{shahDistributionQuadraticForm2005}),
	\begin{equation}
	\begin{aligned}
	&M_g\left(y_{gk},s \right)=\int\exp\left(sy_{gk} \right)f_w\left( {\bf w}\right)ds =\sum_{j=1}^p \pi_j\int\exp\left(sy_{gk} \right)\mathcal{N}\left( {\bf w}; \mu_j,\Sigma_j\right)ds \text{ [using (\ref{eqn:f_gmm})]}\\
	&=\sum_{j=1}^p\pi_jM\left(y_{j,k},s \right) \nonumber
	\end{aligned}
	\end{equation} \vspace{-1.1cm}
	\end{proof}
 \subsection{Case 2.2: $f_w({\bf w})$ is GMM and $h_c\left( {\bf x}_k,{\bf u}_k\right)$ given by (\ref{eqn:hck_lin})} \label{subsec:case2.2} Here, we assume the system and noise models are the same as given in Case 2.1. From (\ref{eqn:state_eqn2}) and (\ref{eqn:f_gmm}), we can say ${\bf w}_q = {\bf q}^T{\bf B}{\bf w}$ will be a GMM with the following distribution, 
\begin{equation}
	f_{w_q}({\bf w}_q)=\sum_{j=1}^p \pi_j \mathcal{N}\left({\bf w}_q;{\bf q}^T{\bf B}\mu_j,{\bf q}^T{\bf B}{ \Sigma}_j{\bf B}^T {\bf q} \right).
	\label{eqn:fq_gmm}
\end{equation} 
Finally, $h_c\left( {\bf x}_k,{\bf u}_k\right)$ (\ref{eqn:hck_lin}) can be evaluated as 
\begin{equation}
	h_c\left( {\bf x}_k,{\bf u}_k\right)=\sum_{j=1}^p \pi_j \left( 1- CDF_j\left(\epsilon - {\bf q}^T{\bf A}{\bf x}_k - {\bf q}^T{\bf B}{\bf u}_k \right) \right).
	\label{eqn:fq_gmm}
\end{equation}  
Here, $CDF_j$ denotes the CDF for the normal distribution with mean ${\bf q}^T{\bf B}\mu_j$ and variance ${\bf q}^T{\bf B}{ \Sigma}_j{\bf B}^T {\bf q}$. 

In general, it is difficult to derive a closed-form solution for the stochastic sequential optimal control problem with infinite horizon, unless the value function has a differentiable closed-form expression. Therefore, in what follows, we have used a deep deterministic policy gradient (DDPG) based AC method \cite{lillicrapContinuousControlDeep2016}, since the state and action spaces are real-valued for our problem. Additionally, the AC method is a model-free approach, which is suitable for the unknown model scenario. Furthermore, AC methods provide a rich parameterized policy. Other than the DDPG based AC method, other policy gradient-based AC methods such as natural policy gradient, proximal policy gradient, etc. can also be used to solve the problem.
\section{DDPG based Actor-Critic (AC) Method} \label{sec:AC_method}
This section discusses the deep deterministic policy gradient (DDPG) based AC method (\cite{lillicrapContinuousControlDeep2016}) used in our study. DDPG algorithm incorporates separate target networks alongside a replay buffer, effectively diminishing the correlation between samples. Such an approach leads to a notably more stable training process. 

In our study, we want to derive a policy $\mu(\cdot)$ that maximizes the expected discounted return denoted by $Q^\mu$, see (\ref{eqn:Qfn}).
\begin{equation}
	Q^\mu \left({\bf x}_k, {\bf u}_k \right) = \text{E} \left[R_k \mid  {\bf x}_k, {\bf u}_k \right].
	\label{eqn:Qfn}
\end{equation}
Here $R_k$ denotes the discounted return starting from the $k$-th instant in time till the end of time as given in (\ref{eqn:return}).
\begin{equation}
	R_k = \lim_{T\to\infty} \sum_{i=k}^T \gamma^{i-k}r\left( {\bf x}_k, {\bf u}_k \right),
	\label{eqn:return}
\end{equation}
where $0 < \gamma <1$ is the discount factor and $r\left( {\bf x}_k, {\bf u}_k \right)$ is the reward if we use the control input ${\bf u}_k$ when the state of the system is ${\bf x}_k$ at $k$-th instant in time. We have formulated the reward as the negative per stage cost $g(\cdot)$ as in (\ref{eqn:h_AB}) and (\ref{eqn:r_without_AB}) for the known and unknown model scenarios, respectively.

The purpose of the discount factor is to balance the trade-off between immediate rewards and future rewards and to guide the agent towards actions that lead to long-term success. The discount factor also ensures the return $R_k$ to be finite when $T \rightarrow \infty$. In practice, we can select $\gamma$ to be close to unity, and comparing (\ref{eqn:return}) and (\ref{eqn:JL}), we can say $\text{E}[R_1] \rightarrow T*J_L$ as $\gamma \rightarrow 1$ and $r\left( {\bf x}_k, {\bf u}_k \right)  =-g\left( {\bf x}_k,{\bf u}_k\right) $, where $g\left( \cdot\right)$ is given by (\ref{eqn:h_AB}) or (\ref{eqn:r_without_AB}). 

For the AC method used in our study, the actor, which represents the policy ($\mu$) function, and the critic, which represents the quality ($Q(\cdot)$) function, are approximated by two neural networks parameterized by $\theta^\mu$ and $\theta^Q$, respectively. Furthermore, the actor network takes states as input and provides the control inputs. On the other hand, the critic network takes states and control inputs and provides the expected return from that state-action pair. The AC algorithm followed for our work is provided in Algorithm~\ref{algo:DDPT}, which is similar to the one studied in \cite{lillicrapContinuousControlDeep2016}. In DDPG, two separate target networks, $Q^t$ and $\mu^t$, for the critic and actor, respectively, are used to provide consistent targets to stabilize the learning process. The target networks are updated by slowly tracking the main networks as given in Algorithm~\ref{algo:DDPT}. 

\begin{algorithm}[h!]
\caption{DDPG Algorithm}
\label{algo:DDPT}
    Set $\tau$, learning rates, initial and final variances of zero mean Gaussian noise for exploration ($\mathcal N_t$).     

    
    Initialize critic network $Q({\bf x}, {\bf u} \mid \theta^Q)$ and actor network $\mu({\bf x} \mid \theta^\mu)$ with random weights $\theta^Q$ and $\theta^\mu$.
    
    Initialise the target network $Q^t$ and $\mu^t$ with weights $\theta^{Q^t} \leftarrow \theta^Q$ and $\theta^{\mu^t} \leftarrow \theta^\mu$. 
    
    Initialise the replay buffer $\mathcal D$.  
    
      \For{episode = 1, M}{
        Receive initial observation state ${\bf x}_1$.
        
        \For{t = 1, T}{
            Select action ${\bf u}_t = \mu({\bf x}_t \mid \theta^\mu) + \mathcal N_t$ [$\mathcal N_t$ is zero mean Gaussian noise].
            
            Execute action ${\bf u}_t$ and observe reward $r_t$ and observe new state ${\bf x}_{t+1}$.
            
            Store transition $\left({\bf x}_t, {\bf u}_t, r_t, {\bf x}_{t+1} \right)$ in $\mathcal D$.
            
            Sample a random minibatch of $N$ transitions $\left({\bf x}_i, {\bf u}_i, r_i, {\bf x}_{i+1} \right)$ from $\mathcal D$.
            
            Set $y_i = r_i + \gamma Q^t \left( {\bf x}_{i+1}, \mu^t \left( {\bf x}_{i+1} \mid \theta^{\mu^t} \right) \mid \theta^{Q^t}   \right)$.
            
            Update critic by minimizing the loss: $L = \frac{1}{N} \sum_i \left( y_i - Q \left( {\bf x}_i, {\bf u}_i \mid \theta^Q \right) \right)^2.$
            
            Update the actor policy using the sampled policy gradient:
            \begin{equation*}
                \triangledown_{\theta^\mu} J \approx \frac{1}{N} \sum_i \triangledown_u Q \left( {\bf x}, {\bf u} \mid \theta^Q \right)\mid_{{\bf x} = {\bf x}_i, {\bf u} = \mu({\bf x}_i)} \triangledown_{\theta^\mu} \mu \left({\bf x} \mid \theta^\mu \right)\mid_{{\bf x} = {\bf x}_i}.
            \end{equation*}
            Update the target networks:
            \begin{equation*}
                \theta^{Q^t} \leftarrow \tau \theta^Q + (1-\tau) \theta^{Q^t}\ \text{and}\  \theta^{\mu^t} \leftarrow \tau \theta^\mu + (1-\tau) \theta^{\mu^t}
            \end{equation*}
        }   
        Reduce the variance of Gaussian noise for exploration until it reaches its final value } \vspace{-0.1cm}
\end{algorithm} 
\begin{remark}
	The past experiences of the AC agent, stored as quadruples $\left({\bf x}_k, {\bf u}_k, r({\bf x}_k, {\bf u}_k), {\bf x}_{k+1} \right)$ in the replay memory, are used to train the networks. Furthermore, the reward function is only used for training. Once the actor network is trained, it only takes ${\bf x}_k$ as input and provides ${\bf u}_k$ as output. In other words, we do not need ${\bf x}_{k+1}$ to evaluate the control input value at the $k$-th instant, \ie, ${\bf u}_k$, for the unknown model case, which uses ${\bf x}_{k+1}$ to evaluate the reward value at $k$-th instant in time. Therefore, the control policy function or the actor remains causal for the unknown model case.  
\end{remark}

\begin{remark}
    The closed-loop system, while learning the optimal policy, may become unstable. In the literature, it is sometimes assumed that the system model is known, but the knowledge may not be perfect. This imperfect understanding is then leveraged to develop a stable controller, albeit not the most efficient one (\cite{brunke2022safe}).  In our research, we adopt a similar approach. Specifically, when the state of the system exceeds a predefined threshold, we temporarily transition to a previously established stable controller. The challenge of learning an optimal policy that concurrently maintains system stability during exploration phase is still an open problem and we leave it for our future work.
\end{remark}

\begin{remark}
The theoretical foundations of the DDPG algorithm find their roots in the deterministic policy gradient (DPG) algorithm, as introduced by \cite{silver2014deterministic}. \cite{silver2014deterministic} highlighted the critical importance of the compatibility between the policy and the critic function approximator for the DPG algorithm's convergence. In essence, the critic network must rapidly adapt to changes in the policy. This compatibility is crucial to obtain a true policy gradient in theory. Consequently, the DPG algorithm employs a linear function approximator for the critic network to facilitate theoretical convergence.

However, the transition to DDPG, which utilises deep neural networks to approximate both actor and critic, introduces complexities. This shift means that the aforementioned compatibility cannot be guaranteed. Despite this, empirical evidence from simulation studies like those of \cite{lillicrapContinuousControlDeep2016} shows stable learning in practice. More recent research, such as \cite{wang2021deep}, delves into the challenge of designing a compatible critic network specifically for the DDPG algorithm. However, the problem of theoretical convergence of DDPG algorithm is still an open problem and we leave it for our future studies.
\end{remark}
The following subsection discusses how to find an optimal value of the Lagrange multiplier $C_l$ (\ref{eqn:h_AB}) or (\ref{eqn:r_without_AB}) for a given probabilistic constraint threshold $\delta$ (\ref{eqn:opt_prob}). 
\subsection{Finding an optimal value of Lagrange multiplier $C_l$} 
Using a suitable search method, one can always find a finite positive $C_l$ (\ref{eqn:h_AB}) or (\ref{eqn:r_without_AB}) such that the given probabilistic constraint threshold $\delta$ (\ref{eqn:opt_prob}) is satisfied, unless the problem is infeasible, where $C_l$ becomes infinity. In this work, we have performed a grid search over a feasible range of values for $C_l$ and obtained a table of $C_l$ and corresponding estimated $\hat \delta$ values. One can always use a more precise algorithm for finding the optimal value of $C_l$ for a given $\delta$, such as a dual update algorithm based on a subgradient method with an appropriate step size as studied by \cite{yu2006dual}. The algorithm will converge as long as the step size is sufficiently small.

We have estimated $\hat \delta$ by Monte Carlo (MC) simulations for the known and unknown model scenarios. Say, we simulated $M$ number of trial runs, each for $N$ time steps, and we observed that the constraint $f_c(\cdot) \ge \epsilon$ got satisfied $n$ times in total, then $\hat \delta$ is estimated as $= \frac{n}{MN}$. Note that the CV (\%) variable used in Table~\ref{tab:2nd} and Table~\ref{tab:UAV} is evaluated as $=\frac{n}{MN} \times 100\%$.

In the following section, we have performed numerical simulations using two systems to investigate the performance of the proposed methods.
\section{Numerical Results} \label{sec:results}
For the numerical study, we have used a 2nd order open loop unstable LTI system model, which is a special case of Case  1 and a UAV model from \cite{zhaoPrimaldualLearningModelfree2021}, which is a special case of Case  2. The model parameters of the 2nd order system and the UAV are provided in Appendix~\ref{appdx:model_params}. For both models, we assume that full state information is available. We have compared our proposed method with the risk neutral standard LQR and a chance-constraint MPC method using two metrics, the average expected quadratic cost (\ref{eqn:cost_fun_J}) and the number of times the constraint was violated (CV \%). The optimal controller for the standard LQR problem becomes an affine function of the states for the UAV system, \ie, ${\bf u}_k = {\bf K} {\bf x}_k + {\bf l}$. ${\bf K}$  is the same as given in (\ref{eqn:L}), and ${\bf l}$ is evaluated as ${\bf l} = -(\pi_1\mu_1 + \pi_2\mu_2)$ (\cite{tsiamisRiskConstrainedLinearQuadraticRegulators2020}). On the other hand, we have used a scenario-based chance constraint MPC similar to the one studied by \cite{schildbachScenarioApproachStochastic2014} for our problem, see Algorithm~\ref{alg:chance_constr_MPC}. For the second-order system, a comparison is made between the proposed method and standard LQR; see Table~\ref{tab:2nd}. In Table~\ref{tab:2nd}, our comparison is limited to the proposed method and the standard LQR under the quadratic constraint scenario (Case 1.2). Although simulation studies were conducted for the linear constraint case and compared with MPC based method, these results were found to be analogous to those observed in the UAV model, and, therefore, omitted from the paper to save space. Performances of the proposed method, standard LQR and MPC based methods are compared for the UAV model in Table~\ref{tab:UAV}.
\begin{algorithm}[h!]
\caption{Scenario-based chance constraint MPC}
\label{alg:chance_constr_MPC}
Perform the following steps at each time step $t$:

    Measure the current state ${\bf x}_t$.

    Generate $S$ noise samples ${\bf w}_t^{(1)}, \cdots, {\bf w}_t^{(S)} \sim f_w({\bf w})$.

    Solve the following optimisation problem:
    \begin{equation}
        \begin{aligned}
            &\min_{{\bf u}_{1|t},\cdots,{\bf u}_{T|t}} \sum_{s=1}^S \sum_{i=1}^T \left(f\left( {\bf x}_{i|t}^{(s)},{\bf u}_{i|t}\right) + C_s \mathds{1}_{\left\{f_c\left({\bf x}_{i+1|t}^{(s)}\right) \ge \epsilon\right\}}   \right)\\
            &\text{s.t. (\ref{eqn:state_eqn}) is satisfied. }
        \end{aligned}
    \end{equation}

    Apply the first control input ${\bf u}_{1|t}$ to the system. \vspace{-0.1cm}
\end{algorithm}
Here, $f(\cdot)$ is the same as (\ref{eqn:fk}), and $f_c(\cdot)$ is given by (\ref{eqn:quad_const}) or (\ref{eqn:lin_const}). $C_s$ is the Lagrangian multiplier. ${\bf x}_{i|t}^{(s)}$ denotes the predicted state at $(i+t)$-th time intant for the $s$-th scenatio when the current state is ${\bf x}_t$. 
\begin{table} [h!]
\caption{2nd Order system \vspace{-.50cm}}
\begin{center}
\scriptsize
\begin{tabular}{|c|c|c|c|c|c|c|}
\hline
 & \multicolumn{2}{c|}{Proposed (Known Model)} & \multicolumn{2}{c|}{Proposed (Unknown Model)} & \multicolumn{2}{c|}{LQR} \\
&$Jc$ & CV (\%) & $Jc$ & CV (\%) & $Jc$ & CV (\%) \\
\hline
Case 1.2 & 74.35 & 8.85 & 70.79 & 10.13 & 68.01 & 13.49 \\
\hline
\end{tabular}
\end{center}
\label{tab:2nd}
\end{table} \vspace{-.50cm}
\begin{table} [h!]
\caption{UAV model}
\centering
\scriptsize
\begin{tabular}{|c|c|c|c|c|c|c|c|c|}
\hline
 & \multicolumn{2}{c|}{Proposed} & \multicolumn{2}{c|}{Proposed } & \multicolumn{2}{c|}{LQR} & \multicolumn{2}{c|}{MPC} \\
  & \multicolumn{2}{c|}{(Known Model)} & \multicolumn{2}{c|}{(Unknown Model)} & \multicolumn{2}{c|}{} & \multicolumn{2}{c|}{} \\

&$Jc$ & CV (\%) & $Jc$ & CV (\%) & $Jc$ & CV (\%) & $Jc$ & CV (\%) \\
\hline
Case 2.1 & 166.53 & 6.21 & 139.63 & 4.54 & 110.07 & 17.17 & 140.43 & 1.34\\
Case 2.2 & 129.98 & 4.0 & 122.3 & 1.0 & 110.07 & 13.0 & 128.2 & 0.9\\
\hline
\end{tabular}
\label{tab:UAV}
\end{table}
  Furthermore, for the simulation study, we have used the same network structure of two hidden layers of size (10,100) for both, actor and critic networks. Learning rate = 0.001 and $C_l = 100$ and $10$ for the quadratic and linear constraint cases, respectively. Also, we have used relu activation function for all the layers for both networks. For exploration, we have added a zero-mean Gaussian RV to the control input during training. The variation of the Gaussian RV is reduced from $5{\bf I}$ to $0.01{\bf I}$ in steps. For the MPC based method, the value of $C_s$ is chosen to be five and four for Case 2.1 and Case 2.2, respectively, so that the $J_c$ values becomes similar to the proposed method and we compared the CV values. Note that, in practice, a suitable value of the Lagrangian multiplier can be selected by performing a grid search over a feasible range of values.    

 As expected, the constraint violation percentage gets reduced for the proposed method at the expense of small increase in the quadratic cost when compared with the standard LQG. For instance, the constraint violation percentage got lowered by 64\% whereas $J_c$ increased by 34\% with respect to LQR for the UAV system under known model scenarios for the parameters used in the simulation as given in Appendix~\ref{appdx:model_params}. On the other hand, for the unknown model scenario, the constraint violation percentage is lowered by 74\% whereas $J_c$ increased by 27\% with respect to LQR for the same UAV system. It is interesting to note that the proposed method's performance under the unknown system model assumption is similar to the known model scenario. However, knowing the system model can be useful in designing an initial stable controller that will keep the system stable in the learning phase. This can be useful in the case of safety critical systems. On the other hand, for the unknown system model scenario, designing such an initial stable controller may be challenging. 

  When comparing our DDPG based method with the MPC based approach, we observe slightly better CV values for MPC at a similar control cost. However, it is crucial to note that MPC's performance is heavily depended on the chosen parameters \( S = 20 \) and \( T = 5 \). While increasing these parameters can enhance MPC's performance, it comes with the trade-off of increased computational complexity, which is of the order of $ST^2$ (\cite{skaf2009nonlinear}). In addition, MPC necessitates solving an optimization problem at every time step, in contrast to our method, which only requires evaluating the feed-forward actor network. This distinction renders our approach significantly less computationally complex than MPC. It is also important to acknowledge that MPC is a model-based method, which further differentiates it from our proposed technique. 
  
  \begin{wrapfigure}{tr}{0.3\textwidth}
  \centering
      \includegraphics[width=55mm]{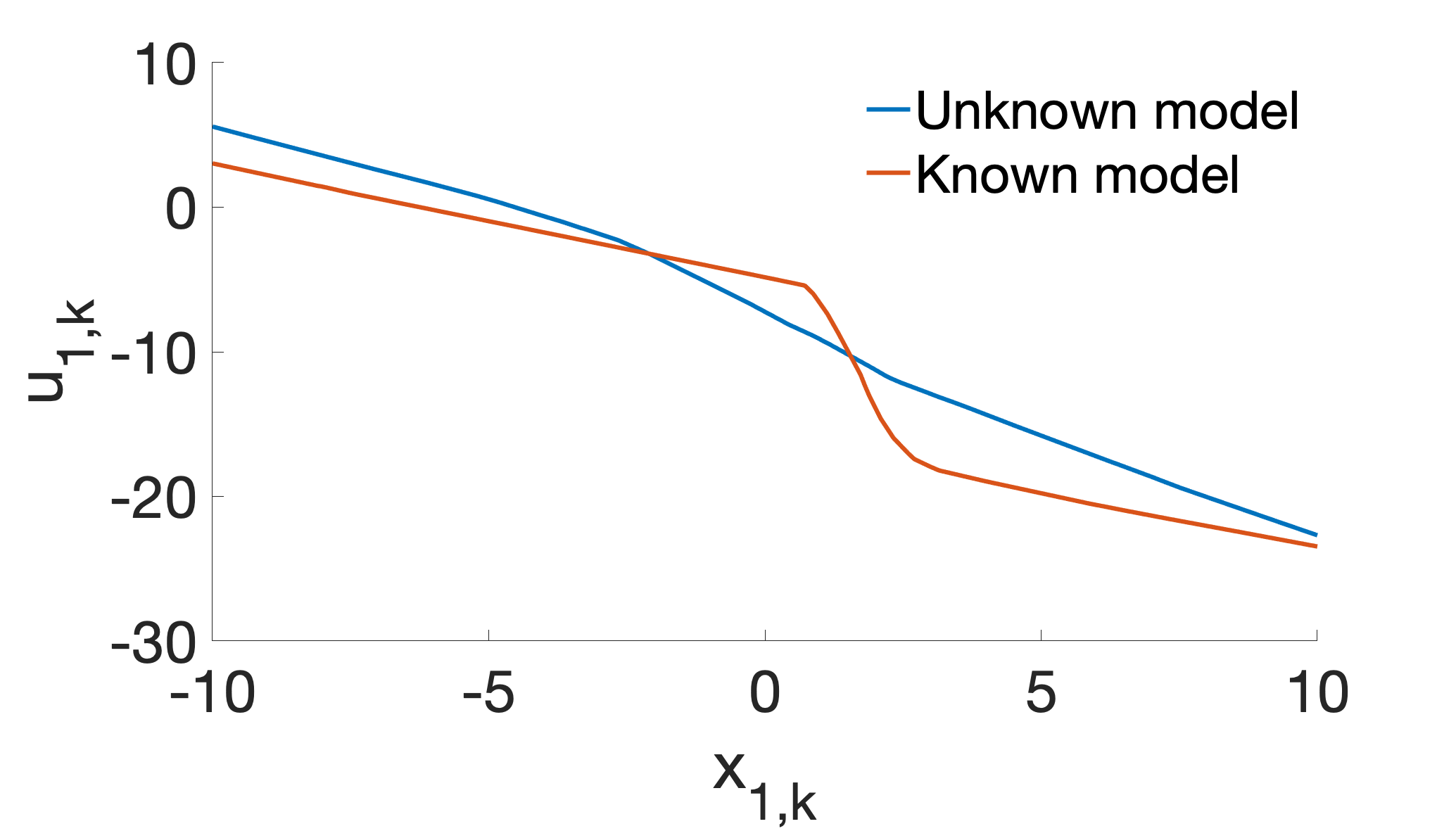}
     \caption{state vs. input}
     \label{fig:xvsu}
 \end{wrapfigure}
 In Fig.~\ref{fig:xvsu}, we have plotted a member of the state vector, \ie, ${\bf x}_{1,k}$, vs. a member of the control input vector, \ie, ${\bf u}_{1,k}$, for the known and unknown system model cases for the UAV system and quadratic constraint. To generate the plot, we have only varied  ${\bf x}_{1,k}$ and kept rest of the element of the state vector fixed at zero, and evaluated the control action from the actor network. While this particular plot only shows the behaviour of the control as a function of the state in one realisation, we have observed similar behaviour in many other instances. We believe this empirical study of the control law behaviour is of interest towards understanding the linearity or otherwise of an optimal control law.


\section{Conclusion} \label{sec:conclusion}
In conclusion, our study proposes a novel approach for handling probabilistic constraints in infinite-time horizon control problems for discrete-time linear Gauss-Markov systems, where risky or undesirable events are modelled as a function of the states crossing a user-defined limit. We have also studied a new reward structure for the case when the system model is unknown. Furthermore, we have applied a deterministic policy gradient-based actor-critic method (DDPG) to derive an optimal policy from observed data, and our extensive numerical simulations demonstrate the effectiveness of our approach in both known and unknown system model scenarios.  Our proposed approach has the potential to be applied in various real-world control problems where probabilistic constraints need to be handled effectively. In is interesting to note that the partial knowledge of the system model is not required to find a good policy but such knowledge can be useful to design an initial stable controller or a safe backup controller. Additionally, the MPC based approach performs better than the proposed method whereas the computational complexity is significantly higher. A formal proof of the convergence and stability properties of the proposed algorithm is under study and a topic for our future research.

\begin{appendices}
\section{Appendix} \label{appdx:model_params}
\textbf{Double inverted pendulum (2nd order system) model parameters:}
\begin{align*}
&{\bf A} =\begin{bmatrix}
	1.0  & 0.3  \\
	0.3 & 1.1
\end{bmatrix},   {\bf B} =\begin{bmatrix}
0.9 &, 0.5 \\
0.1 & 1.2
\end{bmatrix}, \Sigma_w = diag\left(2,2\right), {\bf W} = \begin{bmatrix}
	1.5 & 0.25  \\
	0.25 & 2.5
\end{bmatrix}, {\bf w} \sim \mathcal{N}\left({ 0}, \Sigma_w \right) \\
&{\bf U} = diag\left(40,70\right), {\bf Q} = 3{\bf W}, \epsilon = 95
\end{align*}
\textbf{UAV model parameters:}
\begin{align*}
	&{\bf A} =\begin{bmatrix}
		1 & 0.5 & 0 & 0  \\
		0 & 1 & 0& 0 \\
		0 & 0 & 1 & 0.5 \\
		0 &0 & 0 & 1
	\end{bmatrix}, {\bf B} =\begin{bmatrix}
	0.125 & 0  \\
	0.5 & 0 \\
	0 & 0.125 \\
	0 & 0.5
\end{bmatrix},  
	{\bf W} = diag\left(1, 0.1, 2, 0.2 \right), {\bf U} = {\bf I} , {\bf Q} = 2{\bf W}, \\
 & {\bf q} = \left[1, 0.1, 2, 0.2 \right]^T, \epsilon = 80 \text{ and } 5 \text{ for quadratic and linear constraints, respectively.}
\end{align*}
The noise vector ${\bf {\bar w}}_k$ consist of two elements, ${\bf {\bar w}}_k = [{\bf {\bar w}}_{1,k}, {\bf {\bar w}}_{2,k}]^T$, where ${\bf {\bar w}}_{1,k}$ has Gaussian mixture distribution, $\sim 0.2\mathcal{N}\left(3,30 \right)+0.8\mathcal{N}\left(8,60 \right)$, and ${\bf {\bar w}}_{2,k} \sim \mathcal{N}\left(0,0.01 \right)$. We can easily convert the noise model for this system into the one studied as Case Study 2 (\ref{eqn:f_gmm}) as follows.
\begin{align*}
 	\pi_1 = 0.2, \pi_2 = 0.8, \mu_1 = \begin{bmatrix} 3 & 0 \end{bmatrix}^T, \mu_2 = \begin{bmatrix} 8 & 0 \end{bmatrix}^T, 
 	\Sigma_1 = diag(30, 0.01), \Sigma_2 = diag(60, 0.01).
 \end{align*}


\end{appendices}
\acks{We thank Swedish Research Council for supporting this work. Grant no. 2017-04053.}

\bibliography{Const_Control}

\end{document}